\begin{document}

\title{\LARGE{Receiver Design for MIMO Unsourced Random Access with SKP Coding}}

\author{Zeyu Han,~\IEEEmembership{Graduate Student Member,~IEEE,} \\
        Xiaojun Yuan,~\IEEEmembership{Senior Member,~IEEE,}
        Chongbin Xu,~\IEEEmembership{Member,~IEEE,} \\
        and Xin Wang,~\IEEEmembership{Senior Member,~IEEE}

\thanks{Zeyu Han, Chongbin Xu, and Xin Wang are with the Key Laboratory for Information Science of Electromagnetic Waves (MoE), Department of Communication Science and Engineering, Fudan University, Shanghai 200433, China (e-mail: chbinxu@fudan.edu.cn).}
\thanks{Xiaojun Yuan is with the National Key Laboratory of Science and Technology on Communications, University of Electronic Science and Technology of China, Chengdu 611731, China (e-mail: xjyuan@uestc.edu.cn).}
\vspace{-1cm}
}

% The paper headers
\markboth{}
{Han \MakeLowercase{\textit{et al.}}: Receiver Design for MIMO Unsourced Random Access with SKP Coding}

% make the title area
\maketitle
% As a general rule, do not put math, special symbols or citations
% in the abstract or keywords.
\begin{abstract}
In this letter, we extend the sparse Kronecker-product (SKP) coding scheme, originally designed for the additive white Gaussian noise (AWGN) channel, to multiple input multiple output (MIMO) unsourced random access (URA). With the SKP coding adopted for MIMO transmission, we develop an efficient Bayesian iterative receiver design to solve the intended challenging trilinear factorization problem. Numerical results show that the proposed design outperforms the existing counterparts, and that it performs well in all simulated settings with various antenna sizes and active-user numbers.
\end{abstract}

% Note that keywords are not normally used for peerreview papers.
\begin{IEEEkeywords}
Massive machine-type communication, MIMO unsourced random access, trilinear problem, Bayesian receiver
%canonical polyadic decomposition
\end{IEEEkeywords}

% For peerreview papers, this IEEEtran command inserts a page break and
% creates the second title. It will be ignored for other modes.
\IEEEpeerreviewmaketitle

\clearpage

\vspace{-0.3cm}
\section{Introduction}

\IEEEPARstart{M}{assive} machine-type communication (mMTC) has become one of the most important scenarios in future wireless communications \cite{beyond,URA}. Unsourced random access (URA), proposed by Polyanskiy in \cite{Unsourced}, provides a new paradigm for mMTC that simplifies system scheduling significantly. The main idea of URA is that all users share a common codebook, and the access point (AP) only needs to recover the set of active codewords in this common codebook. Various works on theoretical analysis and practical coding design for URA have been proposed \cite{SKPC,sa7_8,sa_Rayleigh}.

\textcolor{black}{Recently, URA with multiple receiving antennas at AP, commonly referred to as multiple input multiple output (MIMO) URA, has attracted growing research interest due to its capability in supporting more users and improving spectrum efficiency \cite{MIMOURA,block3,block4,block5_slow,sl_LDPC, TBM, R1, R2}.} The first coding scheme for MIMO URA is proposed in \cite{MIMOURA} based on the segmented tree code. Various optimizations on the tree code and related receiver design are further investigated in \cite{block3,block4,block5_slow}, and noticeable performance improvement has been observed. \textcolor{black}{For the quasi-static channel case, two-phase approaches are proposed to estimate the channels first and then use them for multi-user detection \cite{sl_LDPC, R1}. In \cite{R2}, the fading spread unsourced random access (FASURA) scheme is developed to improve such two-phase approaches by using the data-aided noisy pilot channel estimation (NOPICE) and provides the state-of-the-art results.} 

\textcolor{black}{A tensor approach is proposed in \cite{TBM}.} Specifically, the data of each user is encoded and modulated as a rank-one tensor, and it is recovered by solving a canonical polyadic decomposition (CPD) problem at the receiver. However, it is hard to incorporate soft-in soft-out channel decoding techniques into this design due to the lack of efficient Bayesian CPD algorithms. In this case, only a limited coding gain is observed.

In this letter, we extend the sparse Kronecker-product (SKP) coding scheme in \cite{SKPC} from the AWGN channel to the MIMO URA scenario. The SKP coding shows a noticeable coding gain and a good multiple access capability in the AWGN channel. However, the receiver design for MIMO URA with SKP leads to a challenging trilinear problem. To solve the problem, we convert it to a bilinear problem with structured factor matrices, based on which an efficient Bayesian iterative receiver is then designed. Numerical results show that, the proposed design outperforms the existing schemes, and that it performs well in all simulated settings with various antenna sizes and active-user numbers.

% $\odot$ denotes the dot product;
\emph{Notation}: $\otimes$ denotes the Kronecker product; $\mathbb{C}$ denotes the complex number field; $\mathcal{CN}(\mu,\sigma^2)$ denotes the complex Gaussian distribution with mean $\mu$ and variance $\sigma^2$; ${{\mathcal{CN}}}(a; \mu,\sigma^2)$ denotes the probability density value of random variable $x\sim \mathcal{CN}(\mu,\sigma^2)$ at $x=a$; $\|\cdot\|_2$ denotes the $\ell^2$-norm; \textcolor{black}{and $\delta(\cdot)$ denotes Dirac delta function.}

\vspace{-0.3cm}
\section{System Model \& SKP Encoding}

%\vspace{-0.3cm}
\subsection{System Model}
We consider a MIMO URA system with an $M$-antenna AP and $K$ single-antenna users. Among the $K$ users, only $K_a$ of them are active in a transmission frame \cite{Unsourced}. Denote by $\bm{h}_j\in\mathbb{C}^{M\times1}$ the channel of active user $j$ seen by the AP. Each active user $j$ selects a codeword $\bm{v}_j\in\mathbb{C}^{T_{tot}\times1}$ (corresponding to a $B$-bit packet) from the common codebook $\mathbb{V}$ over $T_{tot}$ complex channel usages for transmission. Following \cite{block5_slow,sl_LDPC,TBM, R1, R2}, we assume that the channel vectors $\{ \bm{h}_j \}$ are constant over the $T_{tot}$ channel usages. Then the received signal $\bm{y}$ at AP is modeled as

\vspace{-0.3cm}
%\begin{small}
\begin{equation}
\bm{y}=\sum_{j=1}^{K_a}\bm{h}_j\otimes\bm{v}_j+\bm{w} \label{eori}
\end{equation}
%\end{small}%
where $\bm{w}\sim\mathcal{CN}(0, N_0 \bm{I}_{MT_{tot}})$ is the additive white Gaussian noise with the power density $N_0$. Each active user $j$ is subject to the power constraint $\textbf{E}\{\|\bm{v}_j\|_2^2\}\leq P$, and the average bit SNR is defined as $E_b/N_0\triangleq{P}/(B N_0)$.

The task of the receiver is to recover the transmitted codewords (i.e., packets) based on $\bm{y}$, i.e., to output the estimated list $\mathcal{L}(\bm{y})$ consisting of at most $K_a$ codewords. Following \cite{Unsourced}, the per-user probability of error (PUPE) is defined as $P_e\triangleq\frac{1}{K_a}\sum_{j=1}^{K_a}\Pr\big\{\mathbb{E}_j\big\}$, where $\mathbb{E}_j\triangleq\big\{\bm{v}_j\notin \mathcal{L}(\bm{y})\big\}\cup\big\{\exists i\neq j,\ \bm{v}_j=\bm{v}_i\big\}$ is the error event of active user $j$. {\color{black} Then the system performance can be measured by the lowest possible ${E_b}/{N_0}$ that meets the given PUPE requirement $P_e\leq \varepsilon$. }

\vspace{-0.3cm}
\subsection{SKP Coding}
We adopt the SKP coding scheme in \cite{SKPC}, where the common codebook $\mathbb{V}=\{\bm{v}=\bm{a}\otimes\bm{x}|\bm{a}\in\mathbb{V}_a, \bm{x}\in\mathbb{V}_x\}$ with $\bm{a}$ and $\bm{x}$ being sparse and forward-error-correction (FEC) codewords respectively. In particular, with the $B$-bit packet $\bm{b}=[\bm{b}^{(a)}; \bm{b}^{(x)}]$, we encode $B_a$-bit $\bm{b}^{(a)}$ and $B_x$-bit $\bm{b}^{(x)}$ into $\bm{a}$ and $\bm{x}$ with $B\!=\!B_a+B_x$ in the following way.

\begin{itemize}
  \item $\bm{b}^{(a)}$ is encoded by the index modulation (IM) as a length-$L_a$ vector $\bm{a}$ consisting of $I_{IM}$ length-$L_{IM}$ IM segments with $L_a=I_{IM}L_{IM}$. Each segment only contains one non-zero element taking its value from the signal constellation $\mathcal{S}$. For convenience, we refer to the non-zero positions of $\bm{a}$ as the support of $\bm{a}$.
  \item \textcolor{black}{ $\bm{b}^{(x)}$ is encoded by the FEC code and then modulated over $\mathcal{S}$, yielding a length-$L_x$ vector $\bm{x}$.}
\end{itemize}
Then the received signal $\bm{y}$ at AP can be written as

\vspace{-0.4cm}
%\begin{small}
\begin{equation}
\bm{y}=\sum_{j=1}^{K_a}\bm{h}_j\otimes\bm{v}_j+\bm{w}=\sum_{j=1}^{K_a} \bm{h}_j\otimes\bm{a}_j\otimes\bm{x}_j+\bm{w}.
\vspace{-0.15cm}
\label{aori1}
\end{equation}
\textcolor{black}{Notice that the reconstruction of $\{\bm{h}_j, \bm{a}_j, \bm{x}_j\}$ from $\bm{y}$ suffers from ambiguity. To eliminate ambiguity, we set the non-zero element in the first segment of $\bm{a}_j$ and the first $e_{Ref}$ elements of $\bm{x}_j$ as the reference symbols (known by the receiver).}

{\color{black} The focus of this paper is to design an efficient receiver to recover $\{ {\bm{h}}_j \}$, $\{ \bm{a}_j \}$ and $\{ \bm{x}_j \}$ from $\bm{y}$}, which is a challenging trilinear factorization problem, a.k.a. the CPD problem with three dimensions. Many existing algorithms, e.g., alternating least squares (ALS), Gauss-Newton method, line-search, etc., can be adopted to solve the problem\cite{CPD_survey}. However, these algorithms are non-Bayesian and cannot exploit the \emph{a priori} knowledge of $\{ {\bm{h}}_j \}$, $\{ \bm{a}_j \}$ and $\{ \bm{x}_j \}$. To improve the performance, we propose to tackle the problem based on Bayesian principles, as detailed in the next section.

\vspace{-0.3cm}
\section{Bayesian Receiver Design}

In this section, we design an efficient Bayesian receiver to recover the data from $\bm{y}$ based on (\ref{aori1}) as follows.

\vspace{-0.3cm}
\subsection{Problem Formulation}
\vspace{-0.1cm}

Considering different combinations of vectors $\bm{h}_j$, $\bm{a}_j$ and $\bm{x}_j$, we can rewrite (\ref{aori1}) as

%\begin{small}
\begin{equation}
\bm{y}=\sum_{j=1}^{K_a} (\bm{h}_j\otimes\bm{a}_j)\otimes\bm{x}_j+\bm{w}, \label{aori}
\end{equation}
%\end{small}%

%\vspace{-0.25cm}
%\begin{small}
\begin{equation}
\bm{y}=\sum_{j=1}^{K_a} \bm{h}_j\otimes(\bm{a}_j\otimes\bm{x}_j)+\bm{w}, \label{aori_2}
%\vspace{-0.1cm}
\end{equation}
%\end{small}%
or alternatively

%\vspace{-0.65cm}
%\begin{small}
\begin{equation}
\tilde{\bm{y}}=\sum_{j=1}^{K_a} (\bm{h}_j\otimes\bm{x}_j)\otimes\bm{a}_j+\tilde{\bm{w}} \label{aori_3}
%\vspace{-0.15cm}
\end{equation}
%\end{small}%
where $\tilde{\bm{y}}$ is a certain permutation of $\bm{y}$. The receiver designs based on (\ref{aori}), (\ref{aori_2}) and (\ref{aori_3}) are similar. Due to space limitation, we focus on the design with (\ref{aori}) in the following.

Denote $\bm{G} = [\bm{h}_1 \otimes \bm{a}_1, \cdots, \bm{h}_{K_a} \otimes \bm{a}_{K_a}] = [\bm{g}_1, \cdots, \bm{g}_{K_a}] \in \mathbb{C}^{M\!L_a\times K_a}$, $\bm{X} = {[\bm{x}_1, \cdots, \bm{x}_{K_a}]}^T\in \mathbb{C}^{K_a\times L_x}$, and reshape $\bm{y}$ to $\bm{Y}\in \mathbb{C}^{M\!L_a\times L_x}$, and $\bm{w}$ to $\bm{W}$ accordingly. Then (\ref{aori}) can be rewritten in a matrix form as
\vspace{-0.10cm}
%\begin{small}
\begin{equation} \label{dec_mat}
\bm{Y}=\sum_{j=1}^{K_a} (\bm{h}_j\otimes\bm{a}_j)\bm{x}_j^T+\bm{W} = \sum_{j=1}^{K_a} \bm{g}_j \bm{x}_j^T + \bm{W} = \bm{G}\bm{X} + \bm{W}.
%\vspace{-0.1cm}
\end{equation}
%\end{small}%

Denote $\bm{H} = [\bm{h}_1, \cdots, \bm{h}_{K_a}]$ and $\bm{A} = [\bm{a}_1, \cdots, \bm{a}_{K_a}]$. The probability model corresponding to (\ref{dec_mat}) can be written as

\vspace{-0.4cm}
%\begin{small}
\begin{equation} \label{dec_prob}
\begin{split}
 p\big(\bm{G}, \bm{H}, \bm{A}, \bm{X} | \bm{Y}\big) 
%& \propto p\big(\bm{Y} | \bm{G}, \bm{H}, \bm{A}, \bm{X}\big) p\big(\bm{G} | \bm{H}, \bm{A}, \bm{X}\big) p\big(\bm{H}, \bm{A}, \bm{X}\big) \\
& \propto p\big(\bm{Y} | \bm{G}, \bm{X}\big)  p\big(\bm{G} | \bm{H}, \bm{A}\big)  p\big(\bm{H}\big) p\big(\bm{A}\big)  p\big(\bm{X}\big) \\
%& = p\big(\bm{Y} | \bm{G}, \bm{X}\big)  \prod_{j=1}^{K_a} p\big(\bm{g}_j | \bm{h}_j, \bm{a}_j\big) p\big(\bm{h}_j\big) p\big(\bm{a}_j\big)  \prod_{j=1}^{K_a} p\big(\bm{x}_j\big) \\
%& \propto \exp\bigg(\frac{\|\bm{Y}-\bm{GX}\|_F^2}{N_0}\bigg) \prod_{j}^{} \delta(\bm{g}_j\!-\!\bm{h}_j\!\otimes\!\bm{a}_j) p\big(\bm{h}_j\big) p\big(\bm{a}_j\big)  \prod_{j}^{} p\big(\bm{x}_j\big)
& \propto e^{ \frac{\|\bm{Y}-\bm{GX}\|_F^2}{N_0} } \cdot \prod_{j}^{} \delta(\bm{g}_j\!-\!\bm{h}_j\!\otimes\!\bm{a}_j) p\big(\bm{h}_j\big) p\big(\bm{a}_j\big)  \prod_{j}^{} p\big(\bm{x}_j\big)
\end{split}
\end{equation}
%\end{small}%
where $p\big(\bm{h}_j\big)$ is the \textit{a priori} distribution of $\bm{h}_j$; $p\big(\bm{a}_j\big)$ and $p\big(\bm{x}_j\big)$ are given by the coding constraints of $\bm{a}_j$ and $\bm{x}_j$, respectively, i.e., $p\big(\bm{a}_j\big)=1/|\mathbb{V}_a|$ for $\bm{a}_j\in\mathbb{V}_a$ and $p\big(\bm{x}_j\big)=1/|\mathbb{V}_x|$ for $\bm{x}_j\in\mathbb{V}_x$. %, and all take $0$ otherwise.
As $\bm{a}_j$ is IM-encoded, $\bm{G}$ is sparse. Thus, the recovery of $\bm{G}$ and $\bm{X}$ from $\bm{Y}$ in (\ref{dec_mat}) is a sparse matrix factorization problem. To solve this problem, we develop an efficient Bayesian receiver, as described in the following.

\vspace{-0.3cm}
\subsection{Bayesian Message-Passing Algorithm}

%With (\ref{dec_prob}),
Fig. \ref{receiver} shows the factor graph of the probability model in (\ref{dec_prob}), consisting of two types of nodes below:
\begin{itemize}
  \item Variable nodes $\bm{Y}$, $\bm{G}$, $\{\bm{g}_j\}$, $\{\bm{h}_j\}$, $\{\bm{a}_j\}$, $\bm{X}$, and $\{\bm{x}_j\}$ depicted as white circles;
  \item Check nodes $p(\bm{Y} | \bm{G}, \bm{X})$, $\{\delta(\bm{g}_j-\bm{h}_j\otimes\bm{a}_j)\}$, $\{p(\bm{h}_j)\}$, $\{p(\bm{a}_j)\}$, and $\{p(\bm{x}_j)\}$ depicted as black boxes.
\end{itemize}

\vspace{-0.2cm}
\begin{figure}[ht]
\centering
\includegraphics[width=0.56\textwidth]{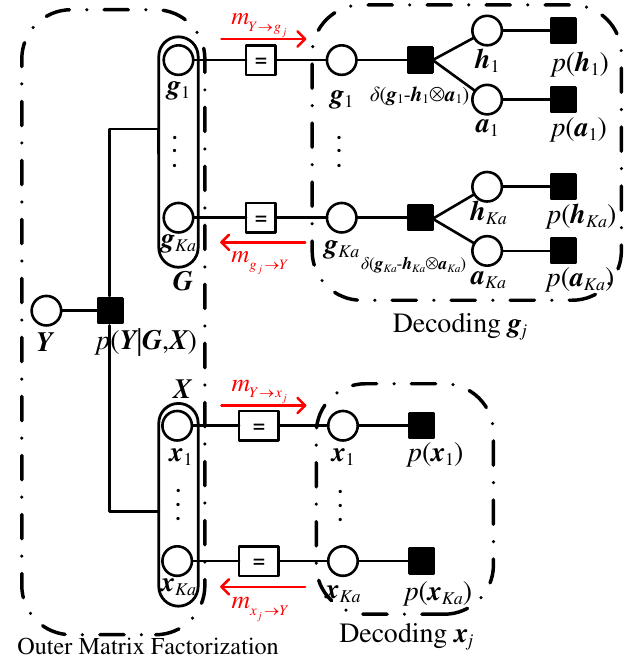} \\
\vspace{-0.2cm}
\caption{The factor graph of the proposed Bayesian algorithm, where ``$=$'' denotes the equality constraint.} %$G=\{\bm{g}_j\}$, $X=\{\bm{x}_j\}$, and
\label{receiver}
\end{figure}

Based on the factor graph, we divide the whole receiver into three modules: Outer matrix factorization to estimate $\bm{G}$ and $\bm{X}$ from $\bm{Y}$ by ignoring the coding structures of $\bm{G}$ and $\bm{X}$; decoding of $\{\bm{g}_j\}$; and decoding of $\{\bm{x}_j\}$. Denote by $\{m_{\bm{Y}\rightarrow\bm{g}_j}\}$\ and $\{m_{\bm{g}_j\rightarrow\bm{Y}}\}$ the messages passed between the factorization and decoding-$\bm{g}_j$ module; and $\{m_{\bm{Y}\rightarrow\bm{x}_j}\}$ and $\{m_{\bm{x}_j\rightarrow\bm{Y}}\}$ the messages passed between the factorization and decoding-$\bm{x}_j$ module. The detailed operations of the above three modules are described as follows.

\subsubsection{Outer Matrix Factorization}
This module estimates $\bm{G}$ and $\bm{X}$ based on the received signal $\bm{Y}$ and the feedback messages of $\bm{G}$ and $\bm{X}$ from the other modules given in the element-wise forms, i.e., $m_{\bm{g}_j\rightarrow\bm{Y}}(\bm{g}_j)=\prod_{l} m_{g_{j,l}\rightarrow\bm{Y}}(g_{j,l})=\prod_{l} \big[(1-\lambda_{j,l})\delta(g_{j,l}) + \lambda_{j,l}\mathcal{CN}(g_{j,l}; \mu_{j,l},\tau_{j,l})\big]$ and $m_{\bm{x}_j\rightarrow\bm{Y}}(\bm{x}_j)=\prod_{l_x} m_{x_{j,l_x}\rightarrow\bm{Y}}(x_{j,l_x})=\prod_{l_x} \big[\sum_{i=1}^{|\mathcal{S}|} \Phi_{j,l_x}(s_i)\delta(x_{j,l_x}-s_i)\big]$ with \textcolor{black}{$s_i\in\mathcal{S}$ (the signal constellation), where $\{\lambda_{j,l}, \mu_{j,l}, \tau_{j,l}, \Phi_{j,l_x}(s_i)\}$ are all scalar parameters in the messages}, and are initialized based on the \emph{a priori} knowledge of $\{ \bm{g}_j \}$ and $\{ \bm{x}_j \}$.

We adopt the bilinear generalized approximate message passing (BiG-AMP) algorithm \cite{BiG1}, \cite{SSL}, \cite{SKPC} for the outer matrix factorization. BiG-AMP is a Bayesian sparse matrix factorization algorithm based on the sum-product rule and Gaussian message approximation. \textcolor{black}{BiG-AMP consists of four main steps. Denote by $\bm{Z}=\bm{G}\bm{X}$. Firstly, the messages of $\bm{Z}$ are calculated based on the messages of $\bm{G}$ and $\bm{X}$. Secondly, by further combining $\bm{Y}$, the \textit{a posteriori} means and variances of $\bm{Z}$ are obtained. Thirdly, the messages of $\bm{G}$ (or $\bm{X}$) are calculated by using the messages of $\bm{Z}$ and $\bm{X}$ (or $\bm{G}$) according to $\bm{Z}=\bm{G}\bm{X}$. Finally, the \textit{a posteriori} means and variances of $\bm{G}$ (or $\bm{X}$) are obtained by further combining the \textit{a priori} information of $\bm{G}$ (or $\bm{X}$). The four steps iterate to obtain the estimates of $\bm{G}$ and $\bm{X}$. The detailed description can be found in TABLE \ref{e1}. }

\begin{table}[H]
\caption{The BiG-AMP algorithm}
\label{e1}
\begin{equation*}
\begin{array}{|lrcl@{}r|}\hline
  \multicolumn{2}{|l}{\textsf{definitions:}}&&&\\
  &p_{{}{z}_{ml}|{}{p}_{ml}}(z|\hat{p};v_p)
   &\triangleq& \frac{p_{{}{y}_{ml}|{}{z}_{ml}}(y_{ml}|z) \,\mathcal{CN}(z;\hat{p},v_p)}
	{\int_{z'} p_{{}{y}_{ml}|{}{z}_{ml}}(y_{ml}|z') \,\mathcal{CN}(z';\hat{p},v_p)} &\\
  &p_{{}{x}_{nl}}(x|\hat{x};v_x)
   &\triangleq& \frac{p_{{}{x}_{nl}}\!(x) \,\mathcal{CN}(x;\hat{x},v_x)}
        {\int_{x'}p_{{}{x}_{nl}}\!(x') \,\mathcal{CN}(x';\hat{x},v_x)}&\\
  &p_{{}{g}_{mn}}(g|\hat{g};v_g)
    &\triangleq& \frac{p_{{}{g}_{mn}}\!(g) \,\mathcal{CN}(g;\hat{g},v_g)}
         {\int_{g'}p_{{}{g}_{mn}}\!(g') \,\mathcal{CN}(g';\hat{g},v_g)}&\\
  \multicolumn{2}{|l}{\textsf{initialization:}}&&&\\
  &\forall m,l:
   \hat{s}_{ml}(0) &=& 0 &\\
  &\forall m,n,l: \textsf{choose~} &
  \multicolumn{2}{l}{\tilde{x}_{nl}(1), v_{\tilde{x}_{nl}}(1), \tilde{g}_{mn}(1), v_{\tilde{g}_{mn}}(1)} &\\
  \multicolumn{2}{|l}{\textsf{for $t=1,\dots T_\textrm{max}$}}&&&\\
  &\forall m,l:
   v_{\bar{p}_{ml}}(t)
   &=& \textstyle \sum_{n=1}^{N} \left[|\tilde{g}_{mn}(t)|^2 v_{\tilde{x}_{nl}}(t) + v_{\tilde{g}_{mn}}(t) |\tilde{x}_{nl}(t)|^2\right] \!\!& \text{(A1)}\\
  &\forall m,l:
   \bar{p}_{ml}(t)
   &=& \textstyle \sum_{n=1}^{N} \tilde{g}_{mn}(t) \tilde{x}_{nl}(t) & \text{(A2)}\\
    &\forall m,l:
   v_{p_{ml}}(t)
   &=& \textstyle v_{\bar{p}_{ml}}(t) + \sum_{n=1}^{N} v_{\tilde{g}_{mn}}(t) v_{\tilde{x}_{nl}}(t) & \text{(A3)}\\
  &\forall m,l:
   \hat{p}_{ml}(t) &=&
   \textstyle \bar{p}_{ml}(t) - \hat{s}_{ml}(t\!-\!1)v_{\bar{p}_{ml}}(t)& \text{(A4)}\\
  &\forall m,l:
   v_{\tilde{z}_{ml}}(t) &=&
   \textstyle \textrm{var}\{z_{ml}\sim p_{z_{ml}|p_{ml}}(z_{ml}|\hat{p}_{ml}(t);v_{p_{ml}}(t))\} & \text{(A5)}\\
  &\forall m,l:
    \tilde{z}_{ml}(t) &=&
   \textstyle  \textrm{E}\{z_{ml}\sim p_{z_{ml}|p_{ml}}(z_{ml}|\hat{p}_{ml}(t);v_{p_{ml}}(t))\} & \text{(A6)}\\
  &\forall m,l:
   v_{s_{ml}}(t) &=&
   \textstyle {(1 -  v_{\tilde{z}_{ml}}(t)/v_{p_{ml}}(t))/v_{p_{ml}}(t)}  & \text{(A7)}\\
  &\forall m,l:
   \hat{s}_{ml}(t) &=&
   	\textstyle ( \tilde{z}_{ml}(t) - \hat{p}_{ml}(t))/v_{p_{ml}}(t) & \text{(A8)}\\
 &\forall n,l:
   v_{x_{nl}}(t)
   &=& \textstyle \big(\sum_{m=1}^{M} |\tilde{g}_{mn}(t)|^2 v_{s_{ml}}(t)
	\big)^{-1} & \text{(A9)}\\
  &\forall n,l:
   \hat{x}_{nl}(t)
   &=& \textstyle \tilde{x}_{nl}(t) ( 1 - v_{x_{nl}}(t) \sum_{m=1}^{M} v_{\tilde{g}_{mn}}(t) v_{s_{ml}}(t)   ) &\\
   &&&\qquad+ v_{x_{nl}}(t) \sum_{m=1}^{M} \tilde{g}^{*}_{mn}(t)
	\hat{s}_{ml}(t)  & \text{(A10)}\\
  &\forall m,n:
   v_{g_{mn}}(t)
   &=& \textstyle \big(\sum_{l=1}^{L} |\tilde{x}_{nl}(t)|^2 v_{s_{ml}}(t)
	\big)^{-1} & \text{(A11)}\\
  &\forall m,n:
   \hat{g}_{mn}(t)
   &=& \textstyle \tilde{g}_{mn}(t)(1 - v_{g_{mn}}(t) \sum_{l=1}^{L} v_{\tilde{x}_{nl}}(t) v_{s_{ml}}(t)    ) &\\
   &&&\qquad+ v_{g_{mn}}(t) \sum_{l=1}^{L} \tilde{x}^{*}_{nl}(t)
	\hat{s}_{ml}(t)  & \text{(A12)}\\
  &\forall n,l:
	v_{\tilde{x}_{nl}}(t\!+\!1) &=&
		 \textrm{var}\{x_{nl}\sim p_{x_{nl}}(x_{nl}|\hat{x}_{nl}(t); v_{x_{nl}}(t))\} & \text{(A13)}\\
  &\forall n,l:
    \tilde{x}_{nl}(t\!+\!1) &=&
    	 \textrm{E}\{x_{nl}\sim p_{x_{nl}}(x_{nl}|\hat{x}_{nl}(t); v_{x_{nl}}(t))\} & \text{(A14)}\\
  &\forall m,n:
	v_{\tilde{g}_{mn}}(t\!+\!1) &=&
		 \textrm{var}\{g_{mn}\sim p_{g_{mn}}(g_{mn}|\hat{g}_{mn}(t); v_{g_{mn}}(t))\}& \text{(A15)}\\
  &\forall m,n:
	\tilde{g}_{mn}(t\!+\!1) &=&
		 \textrm{E}\{g_{mn}\sim p_{g_{mn}}(g_{mn}|\hat{g}_{mn}(t); v_{g_{mn}}(t))\} & \text{(A16)}\\
   \multicolumn{4}{|c}{\textsf{if $\sum_{m,l} |\bar{p}_{ml}(t) - \bar{p}_{ml}(t\!-\!1)|^2 \le \tau_\textrm{BiG-AMP} \sum_{m,l} |\bar{p}_{ml}(t)|^2$, {\textsf{stop}}}}&\\
    \multicolumn{2}{|l}{\textsf{end}}&&&\\\hline
\end{array}
\end{equation*}

\textit{Hint: For clarity, the notations used here are slightly different from those in the reference \cite{BiG1}.}
\end{table}

Following the message passing principle \cite{BP}, the outputs of this module to the other two decoding modules are the extrinsic messages of $\bm{G}$ and $\bm{X}$\footnote{{The phase ambiguity of $\bm{G}$ and $\bm{X}$ is eliminated based on the reference symbols in $\bm{X}$.}} \textcolor{black}{(obtained by excluding the input \textit{a priori} information of $\bm{G}$ and $\bm{X}$ from the output of BiG-AMP)} according to (A9)--(A12), denoted by $\{m_{\bm{Y}\rightarrow\bm{g}_j}\}$ and $\{m_{\bm{Y}\rightarrow\bm{x}_j}\}$ respectively. With these outputs, the estimates of $\bm{G}$ and $\bm{X}$ can be further refined in the two decoding modules, as detailed below.

\subsubsection{Decoding $\bm{g}_j$}
This module estimates $\bm{g}_j=\bm{h}_j\otimes\bm{a}_j$ based on the message $m_{\bm{Y}\rightarrow\bm{g}_j}(\bm{g}_j)=\prod_{l}{m_{\bm{Y}\rightarrow{g_{j,l}}}}(g_{j,l})$ and the prior information $p\big(\bm{h}_j\big)$ and $p\big(\bm{a}_j\big)$. From the sum-product rule, each $m_{g_{j,l}\rightarrow\bm{Y}}(g_{j,l})$ can be calculated as

\color{black}
\vspace{-0.3cm}
%\begin{small}
\begin{equation} \label{all_msg}
\begin{split}
m_{g_{j,l}\rightarrow\bm{Y}}(g_{j,l})
%= & \sum_{\bm{a}_j\in \mathbb{V}_a} \int_{\bm{g}_j\backslash\{g_{j,l}\}, \bm{h}_j} \delta(\bm{g}_j-\bm{h}_j\otimes \bm{a}_j) \\
%& \qquad\qquad \cdot m_{\bm{Y}\rightarrow \bm{g}_j\backslash\{g_{j,l}\}}(\bm{g}_j) p(\bm{h}_j) p(\bm{a}_j).
= \sum_{\bm{a}_j\in \mathbb{V}_a} \int_{\bm{g}_j\backslash \{g_{j,l}\}, \bm{h}_j} \delta(\bm{g}_j-\bm{h}_j\otimes \bm{a}_j) \cdot \Big(\prod_{l'\neq l}m_{\bm{Y}\rightarrow {g_{j,l'}}}(g_{j,l'})\Big) p(\bm{h}_j) p(\bm{a}_j).
\end{split}
\end{equation}
%\end{small}%
\color{black}
With $\bm{a}^{(n)}$ denoting the $n$-th codeword in $\mathbb{V}_a$, the exact calculation of (\ref{all_msg}) needs to enumerate all $\bm{a}_j\in \mathbb{V}_a=[\bm{a}^{(1)}, \cdots, \bm{a}^{(2^{B_a})}]$, which is computationally prohibitive. Here we propose a low-complexity approximate method. We select a subset $\tilde{\mathbb{V}}_a=[\bm{a}^{(i_1)}, \cdots, \bm{a}^{(i_{N_{top}})}]\subset \mathbb{V}_a$ including $N_{top}$ choices of $\bm{a}_j$ corresponding to the $N_{top}$ terms in (\ref{all_msg}); and calculate (\ref{all_msg}) approximately based on $\tilde{\mathbb{V}}_a$. This method is similar to the list decoding method. Recall that $\bm{g}_j=\bm{h}_j\otimes\bm{a}_j$. The optimal $\tilde{\mathbb{V}}_a$ needs to consider all possible realizations of $\bm{h}_j$, which is computationally prohibitive as well. In this paper, we obtain an initial $\tilde{\mathbb{V}}_a$ based on a simplified signal model of $\bm{g}_j$, and then refine $\tilde{\mathbb{V}}_a$ by using alternating optimization.% based method. % to obtain $\tilde{\mathbb{V}}_a$.

\color{black}

{\textit{Initialize $\tilde{\mathbb{V}}_a$}:
We propose a two-step method to initialize $\tilde{\mathbb{V}}_a$. We first select $N_{top}$ possible supports of $\bm{a}_j$, and then obtain $N_{top}$ choices of $\bm{a}_j$ based on the selected supports. The detailed exposition is provided in Appendix A.

{\textit{Refine $\tilde{\mathbb{V}}_a$}}:
With $\tilde{\mathbb{V}}_a$, we refine each candidate estimate $\hat{\bm{a}}_j=\bm{a}^{(i_{n})}$ of $\bm{a}_j$ by estimating $\bm{h}_j$ and $\bm{a}_j$ alternatingly. With $m_{\bm{Y}\rightarrow{\bm{g}_{j}}}(\bm{g}_j)=\mathcal{CN}(\bm{g}_j; \hat{\bm{g}}_{j}, \nu_{\bm{g}_{j}})$, we have

\vspace{-0.3cm}
%\begin{small}
\begin{equation} \label{msg_modeled}
\hat{\bm{g}}_{j} = \bm{h}_j \otimes \bm{a}_j + \sqrt{\nu_{\bm{g}_{j}}}\zeta_{\bm{g}_j}
\end{equation}
%\end{small}%
where $\zeta_{\bm{g}_j}\sim\mathcal{CN}(0,\bm{I}_{M\!L_a})$. Then,

\begin{itemize}
  \item For a fixed $\bm{a}_j=\hat{\bm{a}}_j$, the likelihood estimate of $\bm{h}_j$ is obtained by using the least square estimation. Then this estimate is combined with $p\big(\bm{h}_j\big)$ to obtain the \textit{a posteriori} estimate. Then the maximum \textit{a posteriori} (MAP) estimate $\hat{\bm{h}}_j$ is obtained consequently, which is used in the next step.
  \item For a fixed $\bm{h}_j=\hat{\bm{h}}_j$, the \textit{a posteriori} estimate of $\bm{a}_j$ can be obtained similarly, and so does the MAP estimate $\hat{\bm{a}}_j$, which is used in the next iteration.
\end{itemize}
The above two operations are iterated until convergence, and the final $\bm{a}_j=\hat{\bm{a}}_j$ is used to update $\bm{a}^{(i_n)}$ in $\tilde{\mathbb{V}}_a$. With all $\{\bm{a}^{(i_n)}\}_{n=1}^{N_{top}}$ updated, $\tilde{\mathbb{V}}_a$ is refined.

%{\textit{Calculation of (\ref{all_msg}):}}
With $\tilde{\mathbb{V}}_a$ available, the output of this module can be calculated by replacing $\mathbb{V}_a$ with $\tilde{\mathbb{V}}_a$ in (\ref{all_msg}).

%\color{black}

\subsubsection{Decoding $\bm{x}_j$}
This module estimates $\bm{x}_j$ based on the message $m_{\bm{Y}\rightarrow\bm{x}_j}$ and the coding constraint $p(\bm{x}_j)$.

\textcolor{black} {First, the $e_{Re\!f}$ reference symbols are decoded.} Following \cite{RIGM}, to facilitate the design of the iterative algorithm, the specific values $s_p$ of these reference signals are assumed to be unknown during the iteration and the knowledge of $x_{j,l_x} = s_p$ is used only in the final step to remove ambiguity. Since all the $e_{Re\!f}$ reference symbols take the same value, the messages of $\{ m_{x_{j,l'_x}\rightarrow\bm{Y}} \}_{l'_x=1}^{e_{Re\!f}}$ are calculated as

%\begin{small}
\begin{equation}
m_{x_{j,l'_x}\rightarrow\bm{Y}}(s_i) \propto \prod_{l'_x=1, l'_x\neq l_x}^{e_{Re\!f}} m_{\bm{Y}\rightarrow{x_{j,l'_x}}}(s_i),\qquad\forall s_i \in \mathcal{S}.
\label{ecc12}
\end{equation}
%\end{small}%

Then, the Bahl-Cocke-Jelinek-Raviv (BCJR) algorithm \cite{BCJR} is used for the decoding of the remaining $\{x_{j,l_x}\}_{l_x=e_{Re\!f}+1}^{L_x}$. The extrinsic information $m_{x_{j,l_x}\rightarrow\bm{Y}}(s_i)$ for each symbol in $\bm{x}_j$ can be calculated based on the log-likelihood ratio (LLR) of each encoded bit output by the BCJR algorithm. Consequently, $m_{\bm{x}_j\rightarrow\bm{Y}}$ is obtained.
\color{black}

The above three modules are iterated until convergence. Finally, the hard-decision results of $\bm{v}_j=\bm{a}_j\otimes\bm{x}_j$ ($\forall j$) can be obtained.

%\color{black}
\vspace{-0.3cm}
\subsection{Packet Decision}
The iterative receiver may suffer from the bad initial points due to its suboptimality. To get a better final output, we run the Bayesian algorithm multiple times (with random initializations), and propose the following packet decision method.

This method maintains a pending list $\mathcal{Q}(\bm{Y})$ of the estimated packets, where the $p$-th packet in this list has two attributes: the codeword $\bm{v}_p$ and the priority $\mathbb{P}_p$. At the end of each trial, $K_a$ estimated packets are obtained. For each estimated packet $j$, if $\exists p\in\mathcal{Q}(\bm{Y}), \bm{v}_j = \bm{v}_p$, its priority increases by 1, i.e., $\mathbb{P}_p\gets\mathbb{P}_p+1$; otherwise the estimated packet is added into $\mathcal{Q}(\bm{Y})$ with its initial priority $\mathbb{P}_p\gets1$.

After $T_{max}$ trials or the number of high-priority packets $\big|\big\{p\in\mathcal{Q}(\bm{Y})\big|\mathbb{P}_p\geq \mathbb{P}_{thr}\big\}\big|\!\geq\!K_a$, the $K_a$ packets with the highest priorities are selected as the resulting list $\mathcal{L}(\bm{Y})$.

The overall Bayesian decoding algorithm is summarized in Algorithm \ref{odecode}. The complexity of this algorithm is $\mathcal{O}(MK_aT_{tot})$ (dominated by the outer matrix factorization). %, which is generally low.

\begin{algorithm}[ht]
\begin{algorithmic}[1]
\Require
$\bm{Y}, M, N_0, K_a$, and SKP coding parameters.
\Ensure
Resulting list $\mathcal{L}(\bm{Y})$ ($|\mathcal{L}(\bm{Y})| \leq K_a$).
\State Initialize $\mathcal{Q}(\bm{Y})\gets\varnothing$.
\For{$t\gets1$ to $T_{max}$}
\State Run the Bayesian iterative algorithm to obtain the estimates of $\bm{v}_j$ ($\forall j$).
\State Use packet decision method based on the above estimates to update $\mathcal{Q}(\bm{Y})$.
\State \textbf{If} $\big|\big\{p\in\mathcal{Q}(\bm{Y})\big|\mathbb{P}_p\geq \mathbb{P}_{thr}\big\}\big|\geq K_a$, \textbf{break}
\EndFor
\State Let $\mathcal{L}(\bm{Y})$ be the $K_a$ packets with the highest $\mathbb{P}_p$ in $\mathcal{Q}(\bm{Y})$.
\end{algorithmic}
\caption{Overall Bayesian decoding algorithm}
\label{odecode}
\end{algorithm}

\vspace{-0.3cm}
\section{Numerical Results}

\subsection{ \textcolor{black}{Performance Limit}}
The \textcolor{black}{capacity limit} of URA is generally difficult to evaluate. The random coding bounds for single-antenna systems are derived in \cite{Unsourced} for the AWGN channel and in \cite{sa_Rayleigh} for the fading channel. However, to the best of our knowledge, \textcolor{black}{no results on capacity limit} for MIMO URA have been reported. This is probably due to the difficulty in finding the optimal decoding subset of users in the case of non-degraded channels.

\textcolor{black}{To provide a benchmark for MIMO URA, we derive a limit of its PUPE by extending the channel $\bm{h}_j$ of each user $j$ as a diagonal matrix $\tilde{\bm{H}_j}\triangleq \|\bm{h}_j\|_2 \cdot \bm{I}\in \mathbb{C}^{M\times M}$ by following the idea in \cite{MEB} as}

%\begin{small}
\begin{equation} \label{expanded_01}
\begin{split}
\tilde{\bm{Y}}=\sum_{j=1}^{K_a} \bm{h_j}\bm{v_j}^T + \tilde{\bm{W}} = \sum_{j=1}^{K_a} \big(\|\bm{h}_j\|_2\bm{I}_M\big)\big(\|\bm{h}_j\|_2^{-1}\bm{h_j}\bm{v_j}^T\big) + \tilde{\bm{W}} = \sum_{j=1}^{K_a} \tilde{\bm{H}_j}\tilde{\bm{V}_j} + \tilde{\bm{W}}
\end{split}
\end{equation}
%\end{small}%
where $\tilde{\bm{V}_j}\triangleq\|\bm{h}_j\|_2^{-1}\bm{h_j}\bm{v_j}^T\in \mathbb{C}^{M\times T_{tot}}$. Here we ignore the correlation of each symbol in $\tilde{\bm{V}_j}$ and only remain the power constraint $\textbf{E}\{\|\tilde{v}_{j(\cdot)}\|_2^2\}\leq \frac{P}{M}$ ($\forall \tilde{v}_{j(\cdot)}\in \tilde{\bm{V}}_j$).

With such an extension, the optimal decoding subset of users in each channel realization can be readily determined. We then generalize the outage analysis for the single-antenna scenario in \cite{CapB} to the multiple-antenna scenario.

Specifically, assume that all $\bm{h}_j$ are known and the decoder ranks all users in descending order of $\|\bm{h}_j\|_2$, i.e., obtains a permutation $\{i_1,i_2,\cdots,i_{K_a}\}$ of $\{1,2,\cdots,K_a\}$ that satisfies $\|\bm{h}_{i_1}\|_2 \geq\|\bm{h}_{i_2}\|_2 \geq \cdots \geq\|\bm{h}_{i_{K_a}}\|_2$. Obviously, the joint decoder can decode $\tilde{K}_a$ users in the optimal subset $\{i_1, i_2, \cdots, i_{\tilde{K}_a}\}$ of $K_a$ successfully, when their rate vector is in the joint capacity region (with the remaining $(K_a-\tilde{K}_a)$ users as interference), i.e.,

%\begin{small}
\begin{equation} \label{expanded_02}
\begin{split}
T_{tot}\log_{2}{\left(1+\frac{\sum_{k'=\tilde{K}_a-k+1}^{\tilde{K}_a}{\|\bm{h}_{i_{k'}}\|_2^2 \frac{P}{M}}}{N_0+\sum_{k''=\tilde{K}_a+1}^{K_a}{\|\bm{h}_{i_{k''}}\|_2^2 \frac{P}{M}}}\right)} > B\cdot k,\qquad\forall k\in\{1,2,\cdots,\tilde{K}_a\}.
\end{split}
\end{equation}
%\end{small}%

\color{black}

We can obtain the maximum feasible $\tilde{K}_a$ for each channel realization by a line-search; we then obtain the limit of PUPE by averaging over all channel realizations. In Fig. \ref{result_all}, we calculate this PUPE limit for each $E_b/N_0$ value by averaging over $10^5$ channel realizations, based on which the $E_b/N_0$ bound to achieve a given PUPE is obtained.

\vspace{-0.3cm}
\subsection{Performance Comparison}

Here we use the settings in \cite{TBM} to compare the performance of the proposed scheme with the existing schemes \cite{MIMOURA,TBM,R1,R2}. Assume independent Rayleigh-fading channels (i.e., $p(\bm{h})\sim\mathcal{CN}(0,I_{M})$), $1\leq M \leq 50$, $10\leq K_a \leq 1500$, $T_{tot} = 3200$, $B = 96$ and $\varepsilon = 0.1$. \textcolor{black}{For SKP coding, the component FEC code is selected to be the tail-biting convolutional code (CC) due to its relatively good performance in short codeword lengths and its efficient BCJR algorithm for SISO decoding. The generator polynomial of CC is $[23\ 33]$ in octal form for low complexity with coding rate adjusted through puncturing. The modulation is set to be $\mathcal{S} = \{\pm \frac{\sqrt{2}}{2}\pm \frac{\sqrt{2}}{2}i\}$, i.e., $\pi/4$-quadrature phase shift keying (QPSK) modulation. The parameters of SKP coding are summarized in Table \ref{params}. (Here we select these parameters empirically. Further performance improvement is possible by optimizing these parameters.) In addition, $N_{top}=10$, $\mathbb{P}_{thr}=3$, and $T_{max}=30$.}

%\vspace{-0.3cm}
\begin{table}[ht]
\centering
\caption{parameters of SKP coding}
\label{params}
\resizebox{0.66\textwidth}{!}{
\begin{threeparttable}
	\begin{tabular}{|c||c|c|c|c|c|c|c|}
		\hline
        $\{M,K_a\}$ & Option & $L_{IM}$ & $L_a$ & $L_x$ & $e_{Re\!f}$ & $\bm{x}_j$\\ % $I_{IM}$ &
        \hline
        \{1,[10,40]\}; \{8,$\forall$\} & Eq. (\ref{aori}) & 8   & 40 & 80 & 7 & $R=1/2$ \textrm{CC}\\ % & 5
        \hline
        \{1,[50,80]\}; \{8,$\forall$\} & Eq. (\ref{aori}) & 14  & 56 & 57 & 5 & $R=3/4$ \textrm{CC}\\ % & 4
        \hline
        \{1,[85,95)\}; \{8,$\forall$\} & Eq. (\ref{aori}) & 26  & 78 & 41 & 1 & \textrm{Uncoded}\tnote{1}\\ % & 3
        \hline
        \{50,$\forall$\} & Eq. (\ref{aori_2}) & 320  & 3200 & 1 & 1 & $B_x=0$\\ % & 10
		\hline
	\end{tabular}
\begin{tablenotes}
%\item[1] Four bits are re-added in the first and the last two punctured indices of the CC in $\bm{x}_j$, to facilitate the usage of tail-biting constraint.
\item[1] For $\bm{x}_j$, uncoded is the extreme case of the CC.
\end{tablenotes}
\end{threeparttable}}
%\vspace{-0.4cm}
\end{table}

\begin{figure*}[ht]
\centering
\subfigure[$M=1$]{
\begin{minipage}[c]{0.32\linewidth}
\centering
\includegraphics[width=\linewidth]{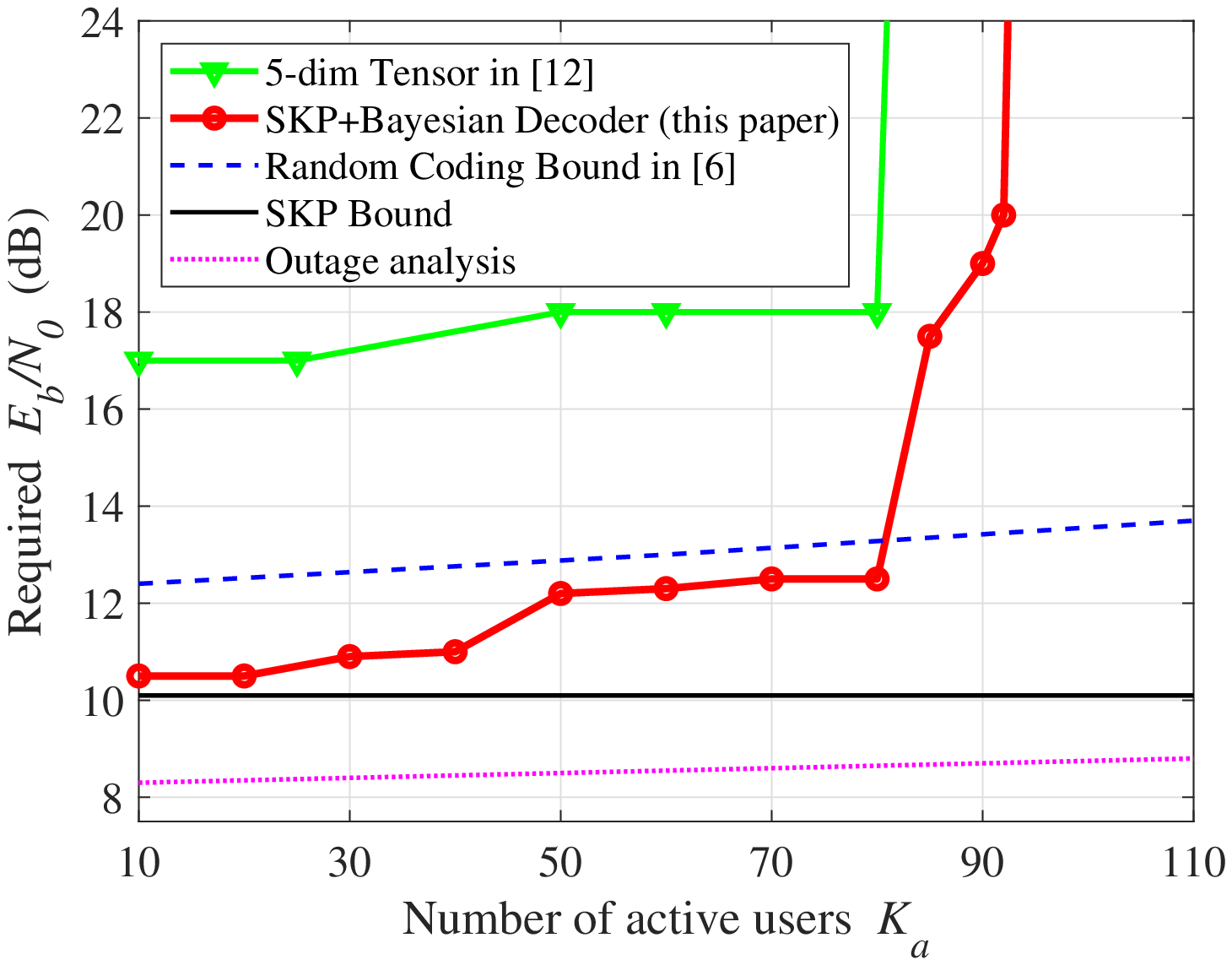}
\end{minipage}
}\hspace{-10pt}
\subfigure[\textcolor{black}{$M=50$ ($B = 100$, $\varepsilon = 0.05$)}]{
\begin{minipage}[c]{0.32\linewidth}
\centering
\includegraphics[width=\linewidth]{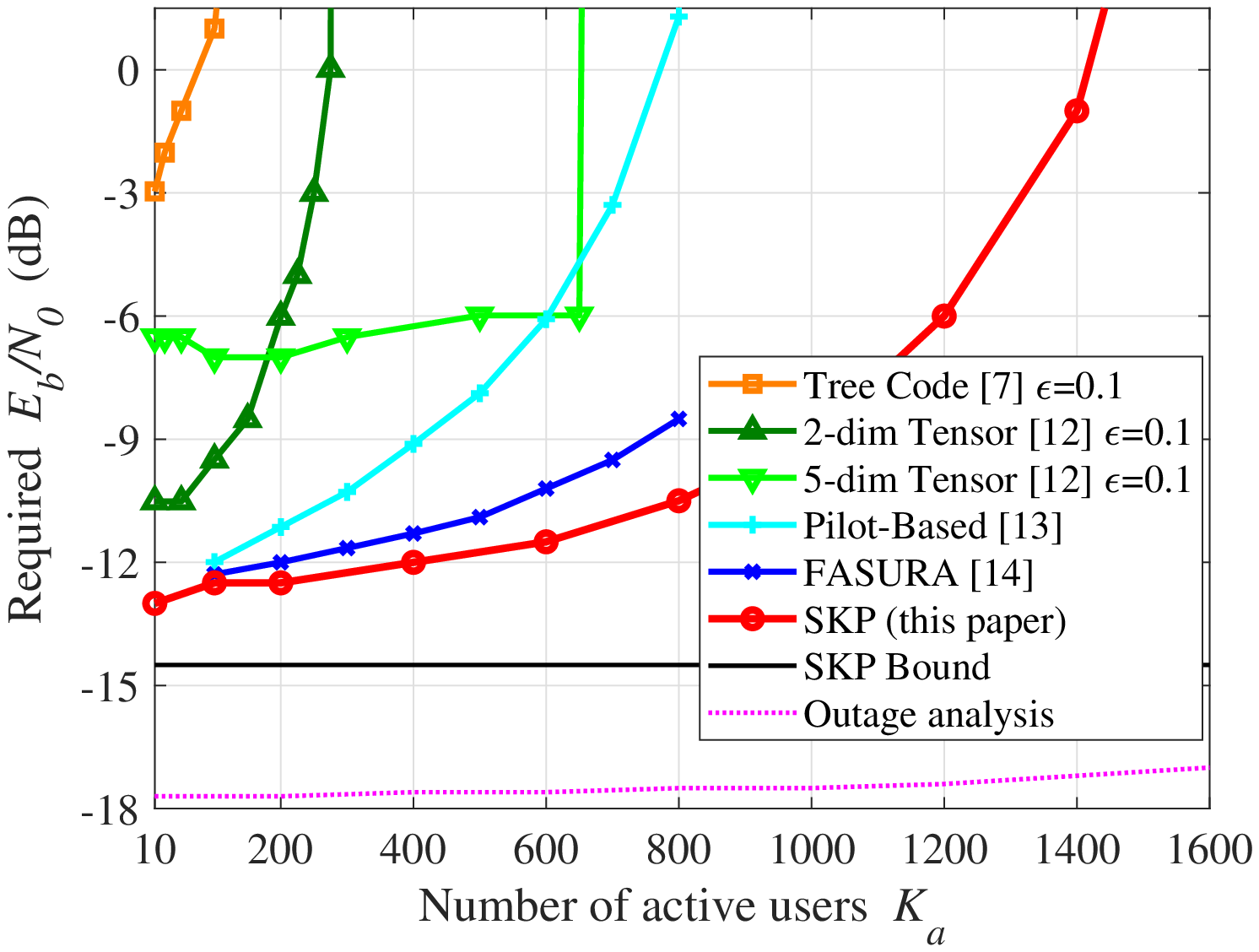}
\end{minipage}
}\hspace{-10pt}
\subfigure[$M=8$]{
\begin{minipage}[c]{0.32\linewidth}
\centering
\includegraphics[width=\linewidth]{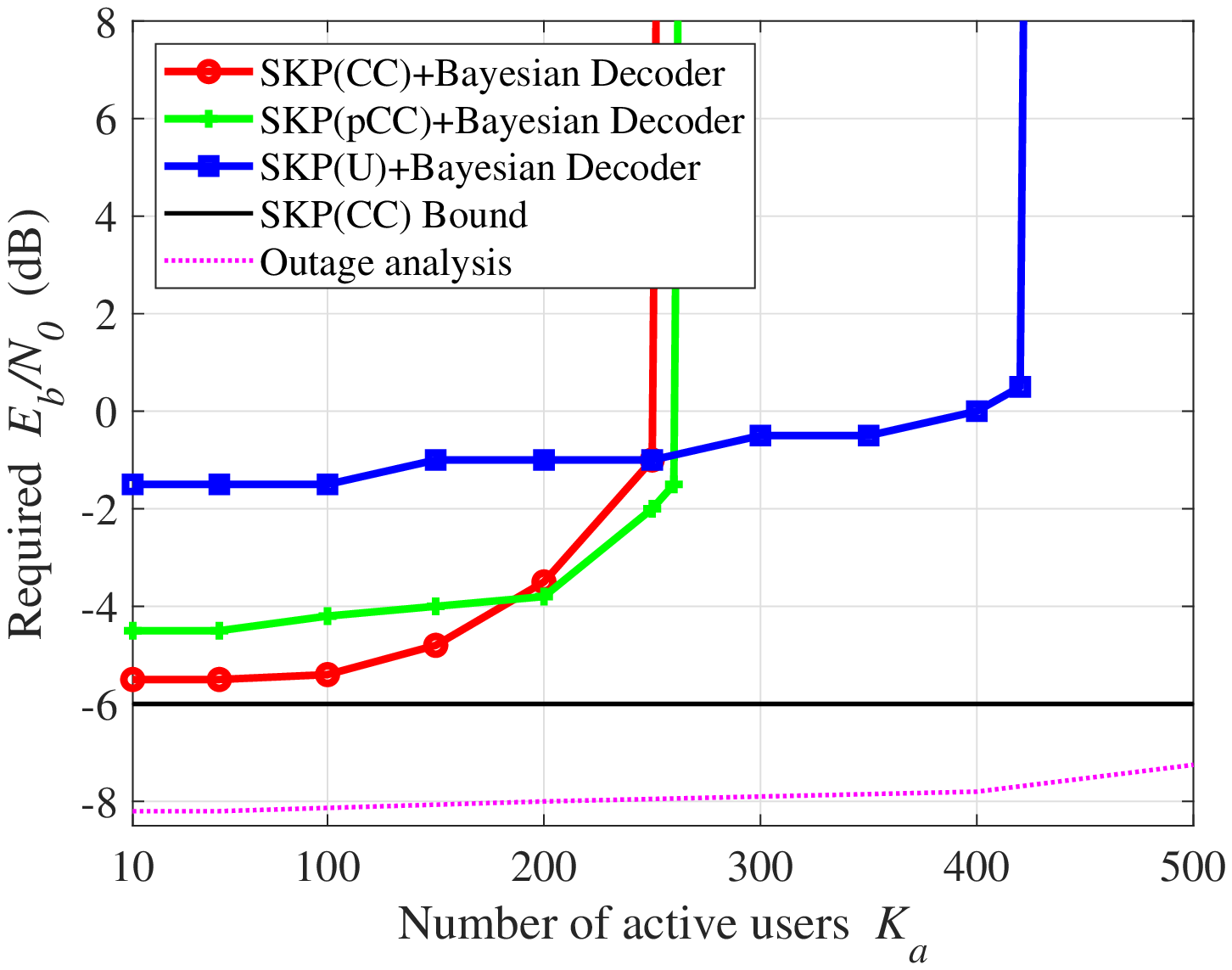}
\end{minipage}
}\hspace{-10pt}
\caption{Required $E_b/N_0$ under various settings: (a) small-size setting; (b) large-size setting; (c) medium-size setting. For ``Tree Code'' in \cite{MIMOURA}, the correlation of channel coefficients between sub-blocks is not considered here.}
\label{result_all}
\vspace{-0.5cm}
\end{figure*}

Fig. \ref{result_all}(a) shows the results \textcolor{black}{of the minimum required $E_b/N_0$} for different schemes with a very small number of antennas, i.e., $M=1$. The SKP bound is derived based on the optimal decoding when $K_a=1$ and $\{\bm{h}_j\}$ are known. It is seen that the proposed scheme has the best performance, and it even outperforms the random coding bound when $K_a\leq 80$ \footnote{For finite block-length, random coding is not optimal; a similar observation has been reported in \cite{sa7_8} for the AWGN channel.}.

Fig. \ref{result_all}(b) shows the results with a large number of antennas, i.e., $M=50$. \textcolor{black}{It is observed that the proposed scheme substantially outperforms the existing MIMO URA schemes in \cite{MIMOURA,TBM,R1,R2}}, and can allow up to 1450 active users.

Fig. \ref{result_all}(c) shows the results with a medium number of antennas, i.e., $M=8$. In this setting, there are no existing results reported in the literature. Thus, we only show the performance of our scheme under three different encoding methods, where ``SKP(CC)'', ``SKP(pCC)'', and ``SKP(U)'' denote the schemes with $R=1/2$ CC, $R=3/4$ punctured CC, and uncoded $\bm{x}_j$, respectively. It is observed that our scheme performs well in this medium-size scenario, and allows up to 420 active users.

\vspace{-0.3cm}
\section{Concluding Remarks}
In this letter, we extended the SKP coding scheme from the AWGN channel to the MIMO channel, and developed an efficient Bayesian receiver. \textcolor{black}{Numerical results show that, the proposed receiver design with the SKP coding performs well in various settings of the MIMO URA scenario.}

\appendices
\vspace{-0.3cm}
\color{black}
\section{Initialization of $\tilde{\mathbb{V}}_a$}

To initialize $\tilde{\mathbb{V}}_a$, we need to calculate the \textit{a posteriori} distribution of each possible $\bm{a}_j$. Define $\bm{g}_j\!=\![\bm{g}_j^{(1)}, \cdots, \bm{g}_j^{(L_a)}]\!=\![a_{j,1}\bm{h}_j^{(1)}, \cdots, a_{j,L_a}\bm{h}_j^{(L_a)}]$ with $\bm{h}_j^{(1)}=\cdots=\bm{h}_j^{(L_a)}=\bm{h}_j$, the \textit{a posteriori} distribution of $\bm{a}_j$ can be written as

\vspace{-0.3cm}
\begin{small}
\begin{equation} \label{dec_prob_appendix1}
\begin{split}
 p\big(\bm{a}_j \big| m_{\bm{Y}\rightarrow{\bm{g}_{j}}}\big) 
& = \int_{\sim\{\bm{a}_{j}\}} p\big(\bm{a}_j, \{\bm{h}_j^{(l_a)}\}_{l_a=1}^{L_a}, \bm{g}_{j} \big| m_{\bm{Y}\rightarrow{\bm{g}_{j}}}\big) \\
& \propto \int_{\sim\{\bm{a}_{j}\}} p\big(m_{\bm{Y}\rightarrow{\bm{g}_{j}}} \big| \bm{g}_{j} \big) p\big(\bm{a}_{j}\big) p\big(\{\bm{h}_j^{(l_a)}\}_{l_a=1}^{L_a}\big) p\big(\bm{g}_{j}\big).
\end{split}
\end{equation}
\end{small}%
%where {\small $p\big(\bm{g}_{j}^{(l_a)}\big)=\delta(\bm{g}_j^{(l_a)}\!-\!\bm{h}_j a_{j,l_a})$}.
Since the exact calculation of (\ref{dec_prob_appendix1}) is complicated, here we ignore the constraint $\bm{h}_j^{(1)}=\cdots=\bm{h}_j^{(L_a)}=\bm{h}_j$ to make each {\small $\int_{\sim\{a_{j,l_a}\}} p(m_{\bm{Y}\rightarrow{\bm{g}_{j}^{(l_a)}}}|\bm{g}_j^{(l_a)}) p(\bm{h}_j) p(\bm{g}_{j}^{(l_a)})\triangleq L(a_{j,l_a}|\bm{g}_{j}^{(l_a)})$} independent and computable separately, as

\vspace{-0.3cm}
\begin{small}
\begin{equation} \label{dec_prob_appendix2}
\begin{split}
 p\big(\bm{a}_j \big| m_{\bm{Y}\rightarrow{\bm{g}_{j}}}\big) 
& \propto p\big(\bm{a}_{j}\big) \int_{\sim\{\bm{a}_{j}\}} p\big(m_{\bm{Y}\rightarrow{\bm{g}_{j}}} \big| \bm{g}_{j} \big) p\big(\{\bm{h}_j^{(l_a)}\}_{l_a=1}^{L_a}\big) p\big(\bm{g}_{j}\big) \\
& \overset{}{\approx} p\big(\bm{a}_{j}\big)\!\prod_{l_a}^{} \int_{\sim\{a_{j,l_a}\}} p\big(m_{\bm{Y}\rightarrow{\bm{g}_{j}^{(l_a)}}}|\bm{g}_j^{(l_a)}\big) p\big(\bm{h}_j^{(l_a)}\big) p\big(\bm{g}_{j}^{(l_a)}\big) \\
& = p\big(\bm{a}_{j}\big)\!\prod_{l_a}^{} L(a_{j,l_a}|\bm{g}_{j}^{(l_a)}).
\end{split}
\end{equation}
\end{small}%

With the simplified model in (\ref{dec_prob_appendix2}), we propose the following two-step method to initialize $\tilde{\mathbb{V}}_a$. We first derive the likelihood of each possible support of $\bm{a}_j$ and select $N_{top}$ likely supports, based on which we then obtain $N_{top}$ choices of $\bm{a}_j$. % using the selected $N_{top}$ supports.

%With (\ref{dec_prob_appendix}) and noting that $\bm{a}$ is IM-encoded ($\bm{a}_{j}\sim p(\bm{a}_{j})$), the support of $\bm{a}_j$ can be softly determined.
\color{black}
Define $u_{j,i_{IM}}\in\{1,\cdots,L_{IM}\}$ the index of the unique non-zero position in the $i_{IM}$-th segment of $\bm{a}_j, i_{IM}\in\{1,\cdots,I_{IM}\}$. The probability of $u_{j,i_{IM}}$ can be calculated as

\vspace{-0.3cm}
\begin{small}
\begin{equation} \label{ele_si_twice}
%\begin{split}
\!\Pr\{u_{j,i_{IM}}\!=\!l\}\!\propto\!\sum_{s_i\in\mathcal{S}}{L(a_{j,l}\!=\!s_i|\bm{g}_{j}^{(l)})}\cdot\prod_{l'\neq l} L(a_{j,l'}\!=\!0|\bm{g}_{j}^{(l')}).
%\end{split}
\end{equation}
\end{small}%

With (\ref{ele_si_twice}), the probability of each possible support $\bm{u}_j=[u_{j,1}, \cdots, u_{j,I_{IM}}]\in\{1,\cdots,L_{IM}\}^{I_{IM}}$ can be calculated as

\vspace{-0.3cm}
\begin{small}
\begin{equation} \label{all_IM_u}
\begin{split}
\Pr\{\bm{u}_j=[l_1, \cdots, l_{I_{IM}}]\} = \prod_{i_{IM}}{\Pr\{u_{j,i_{IM}}=l_{i_{IM}}\}}.
\end{split}
\end{equation}
\end{small}%
With (\ref{all_IM_u}), the $N_{top}$ most likely $\bm{u}_j$ can be obtained by enumeration, which involves an exponential complexity. To bypass the dilemma, we propose a low-complexity recursive method, as outlined below.

The key point is that a subset $\mathcal{P}$ of $\{\bm{u}_j\}$ with size $O(nI_{IM})$, instead of the full set of $\{\bm{u}_j\}$ of size ${L_{IM}}^{I_{IM}}$, is used at the $n$-th recursion to determine the $n$-th most likely candidate, and is updated for the selection of the later $(n+1)$-th to $N_{top}$-th most likely candidates.

{\textit{Initialization:}} Determine the most likely candidate $\bm{u}_j^{(1)}$ from (\ref{all_IM_u}), and initialize $\mathcal{P}=\{\bm{u}_j^{(1)}\}$.

{\textit{Recursion:}} For each $n, 1\leq n\leq N_{top}$:

\begin{itemize}
  \item Let $\bm{u}_j^{(n)}$ be the candidate with the highest $\Pr\{\bm{u}_j\}$ in $\mathcal{P}$. Delete $\bm{u}_j^{(n)}$ from $\mathcal{P}$.
  \item Add $I_{IM}$ candidates that generated from $\bm{u}_j^{(n)}$ by changing the non-zero position of one of its $I_{IM}$ segments into the next maximum-probability position.
\end{itemize}

\newtheorem{lemma}{Lemma}
\begin{lemma}\label{lemma_01}
The $n$-th ($n\geq 2$) most likely choice can always be generated from one of the top-$(n\!-\!1)$ candidates with only one of its $I_{IM}$ segments changing its non-zero position into the next maximum-probability position.
\end{lemma}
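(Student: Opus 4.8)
The plan is to recast the problem in terms of per-segment ranks and then run a short monotonicity argument. First I would, within each segment $i_{IM}$, sort its $L_{IM}$ candidate non-zero positions by decreasing $\Pr\{u_{j,i_{IM}}=\cdot\}$; then every support $\bm{u}_j$ corresponds to a rank vector $\bm{r}=(r_1,\dots,r_{I_{IM}})\in\{1,\dots,L_{IM}\}^{I_{IM}}$, and (\ref{all_IM_u}) reads $f(\bm{r})\triangleq\prod_{i_{IM}}q_{i_{IM}}(r_{i_{IM}})$ with each $q_{i_{IM}}(\cdot)$ nonincreasing by construction. In these coordinates, ``changing one segment's non-zero position into the next maximum-probability position'' is precisely the operation of incrementing one coordinate of $\bm{r}$ by one, and the most likely candidate is the all-ones vector.

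The key structural fact I would establish is monotonicity under the coordinatewise order: if $r'_i\le r_i$ for all $i$ then $f(\bm{r}')\ge f(\bm{r})$, since $f$ is a product of nonnegative nonincreasing factors. To make ``the $n$-th most likely candidate'' well defined in the presence of ties, I would fix a total order on supports that linearly extends the strict partial order ``$f(\bm{r}')>f(\bm{r})$, or $f(\bm{r}')=f(\bm{r})$ with $r'_i\le r_i$ for all $i$ and $\bm{r}'\ne\bm{r}$''; verifying that this relation is irreflexive and transitive is routine. Under the resulting ranking, a support that is coordinatewise $\le$ another but distinct from it receives a strictly smaller rank, and the all-ones vector is the unique rank-one candidate.

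With these pieces the lemma follows quickly. Let $\bm{r}^\star$ be the $n$-th most likely support with $n\ge 2$. Since the all-ones vector is the unique first, $\bm{r}^\star$ has some coordinate $i$ with $r^\star_i\ge 2$; let $\bm{v}$ be $\bm{r}^\star$ with that coordinate decremented by one. Then $\bm{v}$ is coordinatewise $\le\bm{r}^\star$ and distinct from it, hence $\mathrm{rank}(\bm{v})<\mathrm{rank}(\bm{r}^\star)=n$, so $\bm{v}$ is among the top-$(n-1)$ candidates. Reversing the decrement, $\bm{r}^\star$ is obtained from this top-$(n-1)$ candidate by incrementing exactly one coordinate, i.e.\ by moving the non-zero position of one of its $I_{IM}$ segments to the next maximum-probability position, which is exactly the assertion of the lemma.

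The only point requiring care is the tie handling: without a fixed tie-consistent ordering the phrase ``the $n$-th most likely'' is ambiguous and the strict inequality $\mathrm{rank}(\bm{v})<\mathrm{rank}(\bm{r}^\star)$ can fail, so the linear-extension construction in the second step is where I would be careful; the monotonicity of $f$ and the existence of a decrementable coordinate are then immediate, and the same reasoning, applied recursively, is what justifies maintaining only the candidate pool $\mathcal{P}$ in the algorithm.
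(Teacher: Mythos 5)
Your proposal is correct and follows essentially the same argument as the paper: both rewrite each support as a vector of per-segment probability ranks, observe that the most likely candidate is the all-ones rank vector, and show that decrementing any coordinate with rank at least two yields a strictly higher-ranked candidate, so the $n$-th candidate is one increment away from some top-$(n-1)$ candidate. Your explicit treatment of ties via a linear extension of the coordinatewise order is a welcome refinement the paper omits (its claim $n'<n$ implicitly assumes such a tie-breaking convention), but it does not change the substance of the argument.
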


\begin{proof}
%\textit{Proof:}
All $\Pr\{u_{j,i_{IM}}=l\}$ can be calculated based on (\ref{ele_si_twice}), and ranked in descending order for each $i_{IM}$. Specifically, for each $i_{IM}\in\{1,2,\cdots,I_{IM}\}$, a permutation $\{l_{i_{IM}}^{(1)},l_{i_{IM}}^{(2)},\cdots,l_{i_{IM}}^{(L_{IM})}\}$ of $\{1,2,\cdots,L_{IM}\}$ that satisfies $\Pr\{u_{j,i_{IM}}=l_{i_{IM}}^{(1)}\} \geq\Pr\{u_{j,i_{IM}}=l_{i_{IM}}^{(2)}\} \geq \cdots \geq\Pr\{u_{j,i_{IM}}=l_{i_{IM}}^{(L_{IM})}\}$ can be obtained. Obviously, $\bm{u}_j^{(1)}=\big[l_{1}^{(1)}, l_{2}^{(1)}, \cdots, l_{I_{IM}}^{(1)}\big]$.

Generally, for $\bm{u}_j^{(n)}=\left[l_{1}^{(t_1)}, l_{2}^{(t_2)}, \cdots, l_{I_{IM}}^{(t_{I_{IM}})}\right]$, define the operation $\bm{u}_j^{(n)}\!\downarrow\!i_{IM}=\Big[l_{1}^{(t_1)}, \cdots,$ $ l_{i_{IM}}^{(t_{i_{IM}}+1)}, \cdots, l_{I_{IM}}^{(t_{I_{IM}})}\Big]$, which changes the non-zero position of the $i_{IM}$-th segment into the next maximum-probability position to obtain a worse choice than $\bm{u}_j^{(n)}$.

With this definition, $\bm{u}_j^{(n)}$ can be obtained by operating $\bm{u}_j^{(1)}$ totally $\sum_{i_{IM}=1}^{I_{IM}}{(t_{i_{IM}}-1)}$ times, that is $\bm{u}_j^{(n)}=\bm{u}_j^{(1)}(\!\downarrow\!1)^{t_1-1}(\!\downarrow\!2)^{t_2-1}\cdots(\!\downarrow\!I_{IM})^{t_{I_{IM}}-1}$. Since $\exists i'_{IM}, t_{i'_{IM}}>1$ for $n\geq 2$, we construct a choice $\bm{u}_j^{(n')}\triangleq \bm{u}_j^{(1)}(\!\downarrow\!1)^{t_1-1}\cdots(\!\downarrow\!i'_{IM})^{t_{i'_{IM}}-2}\cdots(\!\downarrow\!I_{IM})^{t_{I_{IM}}-1}$, which is a better choice than $\bm{u}_j^{(n)}$ (thereby $n'<n$) and $\bm{u}_j^{(n)}=\bm{u}_j^{(n')}\!\downarrow\!i'_{IM}$. So far, we construct $n'<n$, and the $\bm{u}_j^{(n)}$ can be generated from $\bm{u}_j^{(n')}$ with only the $i'_{IM}$-th segment changing its non-zero position into the next maximum-probability position. Therefore, lemma \ref{lemma_01} holds.
\end{proof}

\color{black}

Since lemma \ref{lemma_01} holds, the above recursive method selects the $N_{top}$ most likely supports of $\bm{a}_j$.

The complexity of the above recursive algorithm is $\mathcal{O}(I_{IM}N_{top}^2)$. The binary search tree \cite{BST} can be used to keep all possible %(at most $N_{top}$)
candidates orderly in recursion, which further reduces the complexity of this algorithm to $\mathcal{O}(I_{IM}N_{top}\log{N_{top}})$.

For each support, we obtain an estimate of $\bm{a}_j$ by using the reference symbol in $\bm{a}_j$. Here we obtain each $\bm{a}^{(i_n)}$ by using the rotationally invariant Gaussian mixture (RIGM) method in \cite{RIGM}. Consequently, we obtain initial $\tilde{\mathbb{V}}_a=\{\bm{a}^{(i_n)}\}_{n=1}^{N_{top}}$.

% Can use something like this to put references on a page
% by themselves when using endfloat and the captionsoff option.
\ifCLASSOPTIONcaptionsoff
  \newpage
\fi

\clearpage

% <OR> manually copy in the resultant .bbl file
% set second argument of \begin to the number of references
% (used to reserve space for the reference number labels box)
\vspace{-0.3cm}

% that's all folks

\begin{thebibliography}{99}

\bibitem{beyond}
X. Chen, D. W. K. Ng, W. Yu, E. G. Larsson, N. Al-Dhahir, and R. Schober, ``Massive access for 5G and beyond,'' \emph{IEEE J. Sel. Areas Commun.}, vol. 39, no. 3, pp. 615--637, Mar. 2021.

\bibitem{URA}
Y. Wu, X. Gao, S. Zhou, W. Yang, Y. Polyanskiy, and G. Caire, ``Massive access for future wireless communication systems,'' \emph{IEEE Wireless Commun.}, vol. 27, no. 4, pp. 148--156, Aug. 2020.

\bibitem{Unsourced}
Y. Polyanskiy, ``A perspective on massive random-access,'' in \emph{Proc. IEEE Int. Symp. Inf. Theory}, 2017, pp. 2523--2527.

\bibitem{SKPC}
Z. Han, X. Yuan, C. Xu, S. Jiang, and X. Wang, ``Sparse Kronecker-product coding for unsourced multiple access,'' \emph{IEEE Wireless Commun. Lett.}, vol. 10, no. 10, pp. 2274--2278, Oct. 2021.

\bibitem{sa7_8}
A. Pradhan, V. Amalladinne, K. Narayanan, and J. Chamberland, ``LDPC codes with soft interference cancellation for uncoordinated unsourced multiple access,'' in \emph{Proc. IEEE Int. Conf. Commun.}, 2021, pp. 1--6.

\bibitem{sa_Rayleigh}
S. Kowshik, K. Andreev, A. Frolov, and Y. Polyanskiy, ``Energy efficient coded random access for the wireless uplink,'' \emph{IEEE Trans. Commun.}, vol. 68, no. 8, pp. 4694--4708, Aug. 2020.

\bibitem{MIMOURA}
A. Fengler, S. Haghighatshoar, P. Jung, and G. Caire, ``Grant-free massive random access with a massive MIMO receiver,'' in \emph{Proc. Asilomar Conf. Signals Syst. Comput.}, 2019, pp. 23--30.

\bibitem{block3}
V. Shyianov, F. Bellili, A. Mezghani, and E. Hossain, ``Massive unsourced random access based on uncoupled compressive sensing: Another blessing of massive MIMO,'' \emph{IEEE J. Sel. Areas Commun.}, vol. 39, no. 3, pp. 820--834, Mar. 2021.

\bibitem{block4}
A. Fengler, S. Haghighatshoar, P. Jung, and G. Caire, ``Non-Bayesian activity detection, large-scale fading coefficient estimation, and unsourced random access with a massive MIMO receiver,'' \emph{IEEE Trans. Inf. Theory}, vol. 67, no. 5, pp. 2925--2951, May 2021.

\bibitem{block5_slow}
Z. Liang and J. Zheng, ``Iterative receiver of uplink massive MIMO unsourced random access,'' in \emph{Proc. Int. Wireless Commun. Mobile Comput.}, 2021, pp. 122--127.

\bibitem{sl_LDPC}
T. Li, Y. Wu, M. Zheng, D. Wang, and W. Zhang, "SPARC-LDPC coding for MIMO massive unsourced random access," in \emph{Proc. IEEE Global Commun. Conf. Workshops}, 2020, pp. 1--6.

\bibitem{TBM}
A. Decurninge, I. Land, and M. Guillaud, ``Tensor-based modulation for unsourced massive random access,'' \emph{IEEE Wireless Commun. Lett.}, vol. 10, no. 3, pp. 552--556, Mar. 2021.

\bibitem{R1}
\textcolor{black}{A. Fengler, O. Musa, P. Jung, and G. Caire, ``Pilot-based unsourced random access with a massive MIMO receiver, interference cancellation, and power control,'' \emph{IEEE J. Sel. Areas Commun.}, vol. 40, no. 5, pp. 1522--1534, May 2022.}

\bibitem{R2}
\textcolor{black}{M. Gkagkos, K. R. Narayanan, J.-F. Chamberland, C. N. Georghiades, ``FASURA: A scheme for quasi-static massive MIMO unsourced random access channels.'' 2022. [Online]. Available: https://arxiv.org/abs/2202.11042}

\bibitem{CPD_survey}
N. D. Sidiropoulos, L. De Lathauwer, X. Fu, K. Huang, E. E. Papalexakis, and C. Faloutsos, ``Tensor decomposition for signal processing and machine learning,'' \emph{IEEE Trans. Signal Process.}, vol. 65, no. 13, pp. 3551--3582, Jul. 2017.

\bibitem{BiG1}
J. T. Parker, P. Schniter, and V. Cevher, ``Bilinear generalized approximate message passing --- Part I: Derivation,'' \emph{IEEE Trans. Signal Process.}, vol. 62, no. 22, pp. 5839--5853, Nov. 2014.

\bibitem{SSL}
T. Ding, X. Yuan, and S. C. Liew, ``Sparsity learning-based multiuser detection in grant-free massive-device multiple access,'' \emph{IEEE Trans. Wireless Commun.}, vol. 18, no. 7, pp. 3569--3582, Jul. 2019.

\bibitem{BP}
F. R. Kschischang, B. J. Frey, and H.-A. Loeliger, ``Factor graphs and the sum-product algorithm,'' \emph{IEEE Trans. Inf. Theory}, vol. 47, no. 2, pp. 498--519, Feb. 2001.


\bibitem{RIGM}
S. Jiang, X. Yuan, X. Wang, C. Xu, and W. Yu, ``Joint user identification, channel estimation, and signal detection for grant-free NOMA,'' \emph{IEEE Trans. Wireless Commun.}, vol. 19, no. 10, pp. 6960--6976, Oct. 2020.

\bibitem{BCJR}
L. Bahl, J. Cocke, F. Jelinek, and J. Raviv, ``Optimal decoding of linear codes for minimizing symbol error rate,'' \emph{IEEE Trans. Inf. Theory}, vol. 20, no. 2, pp. 284-287, Mar. 1974.

\bibitem{MEB}
P. Wang and Li Ping, ``On maximum eigenmode beamforming and multi-user gain,'' \emph{IEEE Trans. Inf. Theory}, vol. 57, no. 7, pp. 4170--4186, Jul. 2011.

\bibitem{CapB}
I. Bettesh and S. Shamai, ``Outages, expected rates and delays in multiple-users fading channels,'' in \emph{Proc. Conf. Inf. Sci. Syst.}, vol. 1, 2000, pp. WA4.7--WA4.15.

\bibitem{BST}
D. E. Knuth, ``Optimum binary search trees,'' \emph{J. Acta Inf.}, vol. 1, pp. 14--25, May 1971.

\end{thebibliography}
\end{document}